\documentclass[titlepage]{vgtc}
\let\ifpdf\relax

\usepackage[utf8]{inputenc}

\usepackage{cite}

\usepackage{subfig}
\usepackage{xcolor}

\usepackage{xspace}
\usepackage{enumerate}

\usepackage{microtype}

\usepackage{amsthm}
\usepackage{amsmath}
\usepackage{amssymb}
\usepackage{amstext}
\PassOptionsToPackage{hyphens}{url}
\usepackage[breaklinks]{hyperref}

\usepackage{mathptmx}
\usepackage{graphicx}
\usepackage{times}

\onlineid{0}
\vgtccategory{Research}
\vgtcinsertpkg

\usepackage{ifpdf}

\makeatletter
\def\url@leostyle{%
  \@ifundefined{selectfont}{\def\UrlFont{\sf}}{\def\UrlFont{\small\ttfamily}}}
\makeatother
\urlstyle{leo}

\def\comment#1{}%
\def\withcomments{%
  \newcounter{mycommentcounter}%
   \def\comment##1{\refstepcounter{mycommentcounter}%
    \ifhmode%
     \unskip%
     {\dimen1=\baselineskip \divide\dimen1 by 2 %
       \raise\dimen1\llap{\tiny
	{-\themycommentcounter-}}}\fi%
     \marginpar[{\renewcommand{\baselinestretch}{0.8}%
       \hspace*{-2em}\begin{minipage}{1.2\marginparwidth}\footnotesize%
[\themycommentcounter]:%
\raggedright ##1\end{minipage}}]{\renewcommand{\baselinestretch}{0.8}%
       \begin{minipage}{1.2\marginparwidth}\footnotesize%
[\themycommentcounter]: \raggedright%
##1\end{minipage}}}%
  }

\newcommand{\CAC}{\textsc{Circular-Arc Cartogram}\xspace}

\DeclareCaptionType{copyrightbox}

\title{Circular-Arc Cartograms}

\author{
Jan-Hinrich Kämper\thanks{e-mail: kaemper@wi-ol.de}\\
\scriptsize Universität Oldenburg, Germany
\and Stephen G.~Kobourov\thanks{e-mail: kobourov@cs.arizona.edu}\\
\scriptsize University of Arizona, Tucson, Arizona, USA
\and Martin Nöllenburg\thanks{e-mail: noellenburg@kit.edu}\\
\scriptsize Karlsruhe Institute of Technology, Germany
}

\newtheorem{theorem}{Theorem}

\newtheorem{problem}{Problem}

\abstract{

    We present a new {\em circular-arc cartogram} model in which countries are
    drawn as polygons with circular arcs instead of straight-line segments.
    Given a political map and values associated with each country in the map,
    a cartogram is a distorted map in which the areas of the countries are
    proportional to the corresponding values. In the circular-arc cartogram
    model straight-line segments can be replaced by circular arcs in order to
    modify the areas of the polygons, while the corners of the polygons remain
    fixed. The countries in circular-arc cartograms have the 
aesthetically pleasing 
		appearance of clouds or snowflakes, depending on whether their
    edges are bent outwards or inwards. This makes it easy to determine
    whether a country has grown or shrunk, just by its overall shape. We show
    that determining whether a given map and given area-values can be realized
    as a circular-arc cartogram is an NP-hard problem. Next we describe a heuristic method for constructing circular-arc cartograms, which
    uses a max-flow computation on the dual graph of the map, along with a
    computation of the straight skeleton of the underlying polygonal
    decomposition. Our method is implemented and produces cartograms that,
    while not yet perfectly accurate, achieve many of the desired areas in our
    real-world examples.

}

\CCScatlist{ 
  \CCScat{I.3.5}%
{Computer Graphics}{Computational Geometry and Object Modeling}{}
}

\teaser{
\centering
\centering
		\subfloat[Geographically accurate map\label{fig:uselection-geo}]{\makebox[.45\textwidth]{\includegraphics[height=.16\textheight]{}}}
		\hfill
	\subfloat[Rectilinear value-by-area cartogram by ChrisnHouston~\cite{rectielec04} \label{fig:uselection-recti}]{\makebox[.5\textwidth]{\includegraphics[height=.16\textheight]{}}}	

     \hfill
        \subfloat[Value-by-area cartogram generated by the diffusion method~\cite{Gastner}, used by permission of M.~Newman~\cite{newmanelec04}\label{fig:uselection-newman}]{\makebox[.48\textwidth]{\includegraphics[height=.16\textheight]{}}}	
       \hfill
	\subfloat[Circular-arc cartogram generated by our method\label{fig:uselection-circ}]{\makebox[.48\textwidth]{\includegraphics[height=.16\textheight]{}}}
	\caption{The results of the 2004 US presidential election. 
	In red states the majority of the vote was for the republican (G.~W.~Bush) and in blue states the majority was for the democrat (J.~Kerry). Note that the circular-arc cartogram makes it easy to see that most blue states are densly populated (cloud shapes), while the red states are more rural. Exceptions such as Oregon (blue but not densly populated) and North Carolina (red but dense) stand out.}
	\label{fig:usa}
}

\begin{document}

\ifpdf
\DeclareGraphicsExtensions{.pdf, .jpg, .tif}
\else
\DeclareGraphicsExtensions{.eps, .jpg}
\fi

\firstsection{Introduction}
\maketitle

	A \emph{cartogram}, or \emph{value-by-area diagram}, is a thematic cartographic visualization, in which the areas of countries are modified in order to represent a given set of values, such as population, gross-domestic product, or other geo-referenced statistical data. Red-and-blue population cartograms of the United States were often used to illustrate the results in the 2000 and 2004 presidential elections. A geographically accurate map seemed to show an overwhelming victory for George W.~Bush; see Fig.~\ref{fig:uselection-geo}. The population cartograms effectively communicate the near even split, by deflating the rural and suburban central states. The rectilinear cartogram shows the correct distribution of red and blue squares, each representing one vote in the electoral college, but many characteristic shapes and adjacencies are compromised; see Fig.~\ref{fig:uselection-recti}. For example, Idaho and Washington are no longer neighbors, and the mirror-image shapes of New Hampshire and Vermont are lost. The balloon cartogram also shows the correct areas, but at the cost of distorted shapes and changes in above/below, left/right relationships; see Fig.~\ref{fig:uselection-newman}.

The challenge in creating a good cartogram is thus to shrink or grow the regions in a map so that they faithfully reflect the set of pre-specified area values, while still retaining their characteristic shapes, relative positions, and adjacencies as much as possible. In this paper we introduce
a new {\em circular-arc cartogram} model, where circular arcs can be used in place of straight-line segments, and corners of the polygons defining each country remain fixed.
Intuitively, a region that grows is inflated and becomes
 cloud-shaped, whereas a region that shrinks is deflated and becomes
 snowflake-shaped. 

Consider the circular-arc cartogram for the 2004 US presidential election: like a traditional cartogram, it also inflates densely populated states (which become cloud-shaped) and deflates sparsely populated ones (which become snowflake shaped); see Fig.~\ref{fig:uselection-circ}.  Note that the circular-arc cartogram preserves adjacencies, and the general shape of the states. Moreover, the circular-arc cartogram
makes it easy to see that nearly all blue states are densely populated and nearly all red states are sparsely populated, something that is not apparent in the rectilinear cartogram.  Finally, exceptions from this pattern are also easy to spot: Oregon is blue but sparse and North Carolina is red but dense. Of course, there is no such thing as a free lunch: the advantages of the circular-arc cartogram come at the expense of some cartographic errors, where accurate inflation and deflation cannot be guaranteed.

There are many design and implementation aspects that determine the
effectiveness of a cartogram. Here we consider four of the main aesthetic and computational criteria:
\begin{enumerate}
\item It is important that the cartogram is {\em readable}, in that it is possible to find every country in the map. Moreover, a readable cartogram makes is possible to visually answer approximate queries about the relative size of the shown countries.
\item It is
 important to ensure that the
cartogram keeps the underlying map structure
 {\em recognizable}. This criterion can be expressed by insisting that the country
 adjacencies in the original map and the cartogram remain
 unchanged. An even stronger version of this requirement is to ensure
 that the relative positions between pairs of countries (e.g., North-South, East-West) are not disturbed.
\item It is important that the cartogram faithfully represents the
given weight function. This criterion is often expressed by the {\em
  cartographic error}, defined as the absolute or relative difference
between the given weight and the area of a country. 
\item The {\em complexity} of a cartogram also impacts its effectiveness. Here, the complexity is often measured by the maximum number of vertices (or
edges) defining the boundary of any country in the cartogram. Highly
schematized cartograms use as few as three or four vertices per
country, while geographically more accurate and recognizable cartograms may
have arbitrarily high complexity.
\end{enumerate}

It is easy to see that there is no perfect method for generating cartograms, that is, there is no method that satisfies all of the main criteria. Most existing methods aim for no cartographic error and low complexity, while sacrificing recognizability (e.g., by allowing adjacencies to be modifies) and/or readability (e.g., by using arbitrary country shapes). 
 Circular-arc cartograms ensure readability by keeping the corners of the
countries undisturbed and easily convey the type of area changes by the
cloud-shape and snowflake-shape of the countries. They are also recognizable
as they retain all adjacencies and also preserve the relative positions of
countries. The complexity is exactly the same as that of the input map: a
highly schematized input map directly results in low complexity of the
resulting cartogram, which at the same time has the advantage that longer
edges allow for larger area changes and thus potentially lower cartographic
error. These advantages come at a cost: it is possible that a given map with
pre-specified areas cannot be realized as a circular-arc cartogram, and
determining whether such a realization exists is NP-hard. However, if we are
willing to tolerate moderate cartographic errors we can use a heuristic
algorithm, which, while not perfectly accurate, achieves many of the desired
areas in our real-world examples.

\subsection{Related Work}
   The problem of representing additional information on top of a
   geographic map dates back to the 19th century, and highly schematized rectangular cartograms can be found in the 1934 work of Raisz~\cite{raisz}.
 With rectangular cartograms it is not always possible to preserve all country adjacencies and realize all areas accurately~\cite{DBLP:conf/infovis/HeilmannKPS04,ks-rc-07}.
    Eppstein {\em et al.} studied area-universal rectangular layouts
    and characterized the class of rectangular layouts for which all
    area-assignments can be achieved with combinatorially equivalent
    layouts~\cite{DBLP:conf/compgeom/EppsteinMSV09}. If the
    requirement that rectangles are used is relaxed to allow the use of
rectilinear regions then de Berg {\em et al.}~\cite{DBLP:journals/dm/BergMS09} showed that all adjacencies can be preserved and all areas can be realized with 40-sided regions. 
In a series of papers the polygon complexity that is sufficient to realize any rectilinear cartogram was decreased from 40 over 34 corners~\cite{kn-odpgwsfa-07}, 12 corners~\cite{bv-ocwcf-11}, 10 corners~\cite{abfgkk-lapcgr-11} down to 8 corners~\cite{abfkku-ccwoc-11}, which is best possible due to the earlier lower bound of 8-sided regions~\cite{ys-fgd2rm-93}.

 More general cartograms, without restrictions to rectangular or
    rectilinear shapes, have also been studied.  Dougenik {\em et al.} introduced a method based on force fields where the map is divided into cells and every cell has a force related to its data value which affects the other cells~\cite{PROG:PROG75}.
    Dorling used a cellular automaton approach, where regions exchange cells until an equilibrium has been achieved, i.e., each region has attained the desired amount of cells~\cite{Dorling}. This technique can result in significant distortions, thereby reducing readability and recognizability.
    Keim {\em et al.} defined a distance between the original map and the cartogram with a metric based on Fourier transforms, and then used a scan-line algorithm to reposition the edges so as to optimize the metric~\cite{DBLP:journals/tvcg/KeimNP04}.
    Edelsbrunner and Waupotitsch generated cartograms using a sequence of homeomorphic deformations and measured the quality  with local distance distortion metrics~\cite{DBLP:journals/comgeo/WelzlEW97}.
    Kocmoud and House~\cite{House:1998:CCC:288216.288250} described a technique that combines the cell-based approach of Dorling~\cite{Dorling} with the homeomorphic deformations of Edelsbrunner and Waupotitsch~\cite{DBLP:journals/comgeo/WelzlEW97}.

A popular method by Gastner and Newman~\cite{Gastner} projects the original map onto a distorted grid, calculated so that cell areas match the pre-defined values. This method relies on a physical model in which the desired areas are achieved via an iterative diffusion process. Flow moves from one country to another until a balanced distribution is reached, i.e., the density is the same everywhere. The cartograms produced this way are mostly readable and have no cartographic error. However, some countries may be deformed into shapes very different from those in the original map, and the complexity of the polygons can increase significantly. 

This brief review of related work is woefully incomplete; a %
survey by Tobler~\cite{Tobler04thirtyfive} provides a more comprehensive overview.

\subsection{Our Contributions}

Our model combines aspects of existing cartogram types, but at the
same time tries to avoid some of the common shortcomings. By pinning the vertices at their input positions and only modifying edge shapes, regions are not displaced and we avoid strong positional distortions that are common, e.g., in the popular diffusion cartograms. On the other hand, the shapes of the regions are not as severely schematized as in rectangular or rectilinear cartograms and recognizability of characteristic shapes is preserved, at least for moderate area changes. The use of the inflation/deflation metaphor makes is possible to immediately recognize regions with positive/negative area changes.

Our results in this paper are as follows. In Section~\ref{sec:model} we formally introduce the circular-arc cartogram model and state the associated algorithmic problem. In Section~\ref{sec:complexity} we show that the circular-arc cartogram problem is NP-hard. In Section~\ref{sec:algorithm} we describe a first heuristic algorithm using network flow and the straight skeleton to minimize the cartographic error in  circular-arc cartograms. %
In Section~\ref{sec:conclusion} we summarize our results and describe several  open problems.

\section{Model}\label{sec:model}
Geometrically, a map of countries or administrative regions is a subdivision $S$ of the plane into a set of disjoint regions or \emph{faces} $\mathcal{F} = \{f_1, \dots, f_n\}$. 
In our model we assume that each face is a simple polygon. 
The topological structure of the map can be described by its \emph{face graph} or \emph{dual graph} $G$, which contains a vertex for each face and an edge between adjacent faces.
In order to construct a cartogram of $S$, we additionally need to specify a weight vector $t=(t_1, \dots, t_n)$, where for each $i=1, \dots, n$ the value $t_i$ is the target area of face $f_i$ in the cartogram.
An \emph{accurate} cartogram of the input pair $(S,t)$ is a subdivision $S'$ that is homeomorphic to $S$ and in which the area of every face $f_i$ equals its given weight $t_i$.

In this paper, we are interested in the special class of \emph{circular-arc cartograms}, i.e., cartograms that can be obtained from the input $S$ by bending each polygon edge $e$ into a circular arc whose endpoints coincide with the endpoints of $e$. 
No two circular arcs are allowed to cross, but we may allow that two arcs touch. 
Bending an edge between two faces $f_i$ and $f_j$ has the effect of transferring a certain area from one face to the other. 
This exchange of area between faces can be seen as a discrete diffusion process similar to the model of Gastner and Newman~\cite{Gastner}. 
The algorithmic problem in creating a circular-arc cartogram is thus to compute a bending radius for each edge of the input subdivision so that the resulting circular-arc subdivision $S'$ remains topologically equivalent to the polygonal input subdivision $S$ and each face $f_i$ has area $t_i$.
We define a \emph{bending configuration} of~$S$ to be an assignment of a bend radius (including radius $r=\infty$ to represent straight-line arcs) to each edge of $S$. A bending configuration is \emph{valid} if no two circular arcs cross and the input topology of $S$ is preserved. We say that a cartogram $S'$ is a \emph{strong} circular-arc cartogram if for every region $f_i$ with a net decrease (increase) in area no incident edge is bent outward (inward). Otherwise we call the cartogram a \emph{weak} circular-arc cartogram. An immediate consequence of strong cartograms is that edges bounding two regions with the same sign of area change must remain straight.

In real-world maps there are often regions (e.g., oceans or seas) whose target area in the cartogram is unspecified. 
Our model allows sea faces in $S$ with no specified target area. 
Note that if there is a single sea face then its target area change is implicitly given by the sum of the target area changes of the other faces.

The \textsc{Circular-Arc Cartogram} (CAC) decision problem then is:
  \begin{problem} [\textsc{CAC}]%
  \label{prob:cac}
  Given a planar polygonal subdivision $S$ and a weight vector $t$, is there a valid bending configuration so that the resulting subdivision $S'$ is an accurate circular-arc cartogram, i.e., all face areas in $S'$ comply with $t$?
  \end{problem}

While the decision version is mainly of theoretical interest, there is also a corresponding optimization version of \CAC.
Here the algorithmic problem is to compute a bending configuration that minimizes the \emph{cartographic error}, i.e., the sum of the differences between the target areas and the actual areas of all faces.
In Section~\ref{sec:complexity} we show that CAC is NP-hard and in Section~\ref{sec:algorithm} we describe a heuristic algorithm that successfully minimizes the cartographic error in practice.

\section{NP-hardness}\label{sec:complexity}
	First note that positive as well as negative CAC instances can be constructed easily: 
	Only polygons whose vertices are cocircular can be made arbitrarily small by bending edges; all other polygons have some positive lower bound on their area in a circular-arc cartogram.
	Hence, for example, no simple non-convex polygon can attain area close to $0$ by replacing straight edges with circular arcs. 
	On the other hand, any subdivision with a target area vector that contains the exact initial face areas is a positive instance.

    \begin{theorem}
    \CAC is NP-hard.
    \end{theorem}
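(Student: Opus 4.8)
The plan is to prove hardness by a reduction from the NP-complete \textsc{Planar 3-Sat} problem (Lichtenstein), which is natural here because a \CAC instance is itself an inherently planar object. The starting observation is that bending a shared edge $e$ between faces $f_i$ and $f_j$ does not create or destroy area but only \emph{transfers} a signed amount across $e$: bulging the arc into $f_i$ shrinks $f_i$ and enlarges $f_j$, and vice versa, so the \emph{sign} of each bulge is a binary datum while its magnitude is continuous. Every accurate configuration must therefore route a prescribed net transfer into each face in order to reach its target $t_i$. If the transfer achievable across each edge were an independent continuous interval, the problem would reduce to a linear feasibility test and lie in P; the source of hardness must therefore be the \emph{non-crossing constraint}, which couples the admissible bends of nearby arcs in a non-convex way. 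The reduction exploits exactly this coupling to manufacture discrete choices out of the continuous bend radii.

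First I would build a \emph{variable gadget}: a small cluster of faces together with local target areas that can be met in exactly two bending states, encoding $x=\textsc{true}$ and $x=\textsc{false}$. The binary behaviour is forced geometrically rather than numerically — two arcs compete for a common pocket so that, by non-crossing, at most one may bulge into it, while the target areas are chosen so that one of them \emph{must} bulge in fully. Chains of faces then serve as \emph{wires}, propagating the chosen state by passing a fixed quantum of area from one face to the next, and an orientation trick lets a wire deliver either a negated or an unnegated literal at its far end. Because the incidence structure of \textsc{Planar 3-Sat} is planar, all wires can be routed without additional crossings, keeping the whole construction a valid simple-polygon subdivision of size polynomial in the formula.

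Next I would attach a \emph{clause gadget}: a face incident to the three wires of its literals whose target area is attainable if and only if at least one incident literal arrives in its satisfying state and supplies the area the clause face needs. Correctness then splits into two directions. For the forward direction, given a satisfying assignment I set every gadget to the corresponding state and verify directly that all faces reach their targets and that no two arcs cross, so $(S,t)$ is a positive \CAC instance. For the converse, I argue that any accurate circular-arc cartogram must put each variable gadget into one of its two admissible states and each clause into a satisfied state, from which a satisfying truth assignment can be read off.

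The hard part will be the converse direction, and specifically establishing \emph{rigidity} of the gadgets against the continuum of bend radii: since the area transferred across an edge is a continuous, monotone function of its radius, I must rule out \emph{spurious} solutions in which gadgets sit in intermediate, half-bent states and fractional transfers conspire to meet every target simultaneously. This requires quantitative bounds — tying each face's extremal attainable area (and, via the cocircular remark above, its minimum area) to the exact targets so that the only feasible transfers are the saturated ones at which the non-crossing constraints become tight. A secondary technical obligation is to confirm that the gadgets admit rational coordinates of polynomial bit-length and that global arc non-crossing follows from the local arguments; the same construction should also extend to the \emph{strong} cartogram variant, since the gadgets only bend edges consistently with the sign of their faces' area changes.
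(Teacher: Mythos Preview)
Your high-level strategy matches the paper's: a reduction from a planar \textsc{3-Sat} variant with variable, wire, and clause gadgets, where the non-crossing constraint is the source of discreteness. You also correctly isolate the crux --- proving \emph{rigidity}, i.e.\ ruling out fractional ``half-bent'' configurations --- and you are honest that this is where the real work lies. However, the proposal stops precisely there: you state what must be shown but supply no mechanism that actually forces it, and the mechanism you sketch (``two arcs compete for a common pocket so at most one may bulge in'') does not by itself exclude both arcs bulging in partially and sharing the required transfer. Without a concrete geometric device that pins the continuum down to finitely many states, the converse direction of the reduction is not established.

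The paper closes this gap with one idea you are missing: a \emph{zero-area triangle}. A triangle with target area~$0$ can be realized by circular arcs in exactly three ways, all of them using arcs of the unique circumcircle of the three vertices; this turns a continuous bend into a genuinely finite choice. The paper sprinkles these $0$-area triangles throughout the construction --- skinny ones to fix the short sides of rectangles, and right-angled ones that can flip between exactly two admissible states --- so that rigidity is obtained locally and for free rather than through a delicate global quantitative argument. The decision rectangle in the variable gadget then has specific dimensions (height~$4$, width~$2$, area defect exactly one half-disc) chosen so that a symmetric ``concave lens'' configuration is geometrically impossible, giving the required binary state. The clause gadget likewise uses three $0$-area right triangles placed so that at most two can bend inward simultaneously, which is exactly the ``at least one literal true'' constraint. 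Two smaller points: the paper reduces from \textsc{Planar Monotone 3-Sat} rather than plain \textsc{Planar 3-Sat}, which buys a clean rectilinear layout with positive clauses above and negative clauses below the variable row and removes the need for a negation gadget; and your closing remark that the construction ``should also extend to the strong cartogram variant'' is not obviously true here, since several faces (e.g.\ the turning squares and the clause polygon) have edges bent both inward and outward.
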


\begin{proof}
	Our reduction is from the NP-complete problem \textsc{Planar Monotone 3-Sat}~\cite{bk-obspp-10}. 
	This problem is a special variant of the \textsc{Planar 3-Sat} problem~\cite{l-pftu-82}: 
	We are given a Boolean formula $\varphi$, in which every clause consists of three literals. 
	Each clause, however, must be monotone, i.e., it may contain either only positive or only negative literals. 
	The planarity of the formula refers to the planarity of the associated bipartite variable-clause graph $G_\varphi$ (with a vertex for every clause and variable of $\varphi$ and an edge between a variable vertex and a clause vertex if and only if the variable appears in the clause).
	It is known that for every instance of \textsc{Planar Monotone 3-Sat} the graph $G_\varphi$ can be drawn in a planar rectilinear fashion by placing the variable vertices on a horizontal line, the positive clauses above that line, and the negative clauses below;	see Fig.~\ref{fig:planmon3sat}.%
	
	\begin{figure}[htb]
		\centering
			\includegraphics[scale=1]{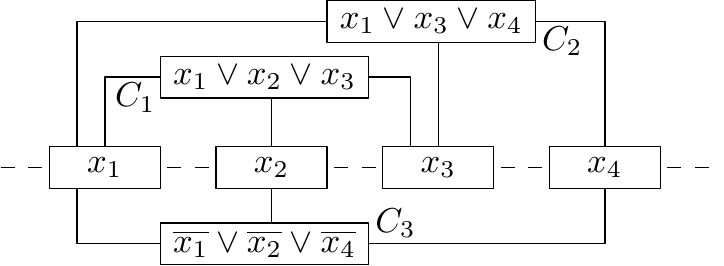}
		\caption{A planar rectilinear drawing of a \textsc{Planar Monotone 3-Sat} instance.}
		\label{fig:planmon3sat}
	\end{figure}
	
	Our reduction constructs a subdivision $S_\varphi$ for the Boolean formula $\varphi$ that resembles the general structure of the rectilinear drawing of $G_\varphi$. The weight vector $t_\varphi$ is chosen so that $S_\varphi$ can be transformed into a valid circular-arc cartogram if and only if $\varphi$ is satisfiable. The subdivision consists of three types of gadgets: the \emph{variable}, \emph{literal}, and \emph{clause} gadgets, which we describe below.
	
	A basic building block in all three gadgets is a triangle with target area~$0$. It is easy to verify that there are exactly three configurations that realize a $0$-area circular-arc triangle, all of which consist of circular arcs of the unique circle defined by the three points; see Fig.~\ref{fig:area0}. This building block is used to control the possible shapes of regions in the cartogram.
	
	\begin{figure}[htb]
		\centering
			\includegraphics[width=\columnwidth]{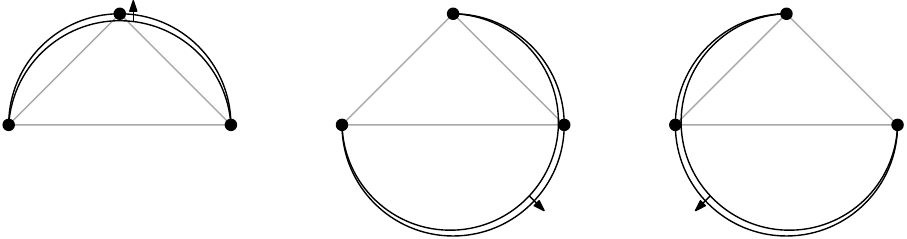}
		\caption{The three  possibilities to realize a circular-arc triangle with area~$0$.}
		\label{fig:area0}
	\end{figure}
	
	\paragraph{Variable gadget} The variable gadget consists of a horizontal row of rectangles with height $4$ and width $2$, except for some taller rectangles in between of height $5$ that serve as \emph{connectors} to the literal gadgets; see Fig.~\ref{fig:var-gadget}. 
	With the exception of the connector rectangles, all rectangles are enclosed on their two short sides by skinny triangles with a base side of length $2$. 
	These triangles have target area $0$.
	They are designed so that two of the three possibilities to achieve area $0$ would require edges to become circular arcs that pass beyond some of the input vertices of the rectangles.
	Hence only a single configuration remains feasible. 
	This immediately fixes the shape of the rectangles' short edges by bending them slightly outward and increases the area of each rectangle by the area of two circular segments.
	We define the area of the circular segment thus attached to each rectangle as $c_1$. 
	We also need a scaled-down version of this triangle with base length $1$ instead of $2$ whose corresponding circular segment thus has area $c_2 = c_1/4$.
	
	\begin{figure}[htb]
			\centering
			\subfloat[Input subdivision]{\includegraphics[scale=.5,page=2]{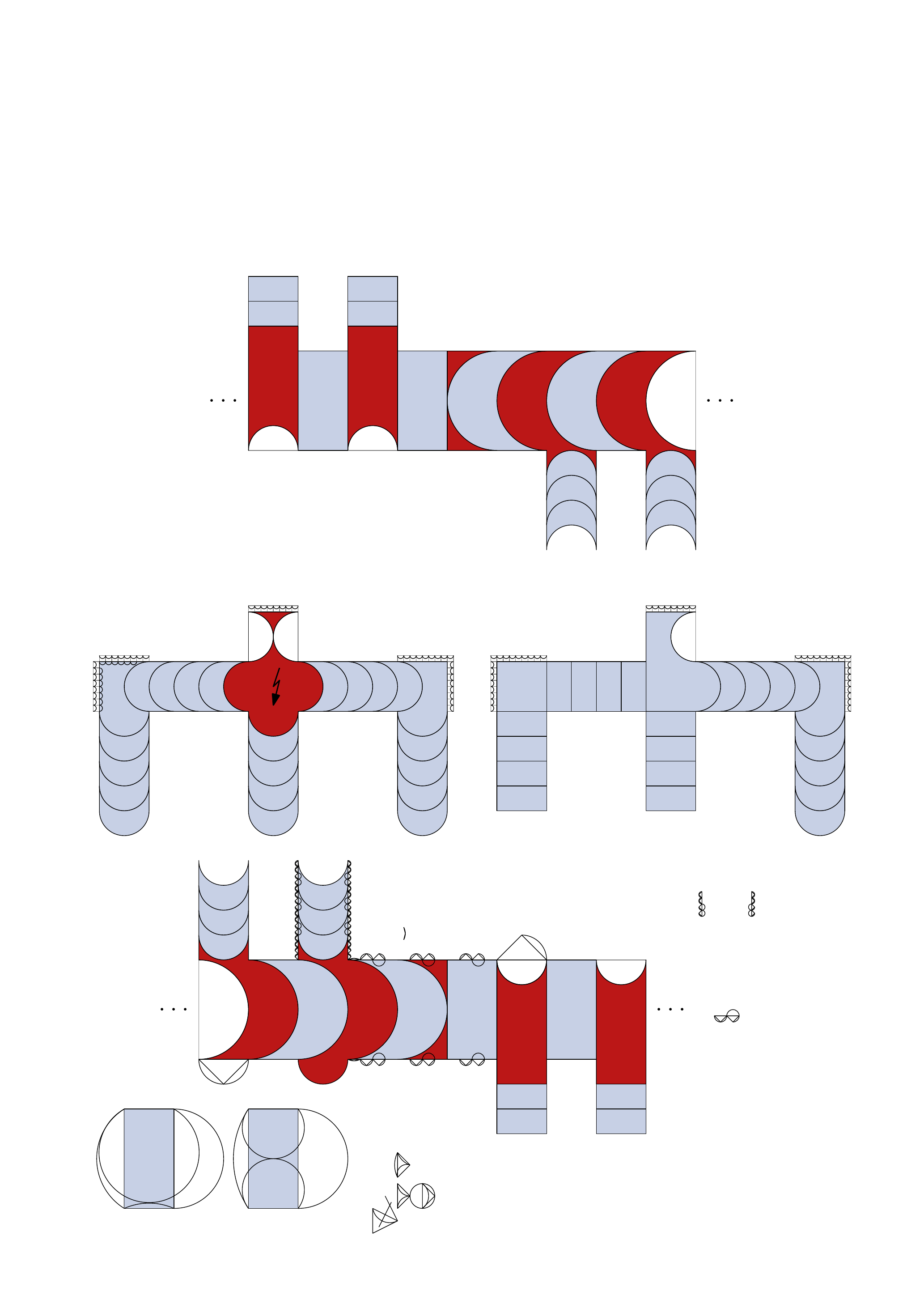}} \hfill
			\subfloat[Variable set to \emph{true}]{\includegraphics[scale=.5,page=3]{Images/hardness-3sat}} \hfill
			\subfloat[Variable set to \emph{false}]{\includegraphics[scale=.5,page=4]{Images/hardness-3sat}}
		\caption{Variable gadget. The central decision rectangle is shown in purple, connector rectangles in green, $0$-area triangles in blue and the remaining rectangles in brown.}
		\label{fig:var-gadget}
	\end{figure}
		
	There is one special \emph{decision rectangle} (purple) in the center of the gadget. 
	The target area change of this rectangle is set to $2c_1-2\pi$, where $2\pi$ is exactly the area of a half-circle with radius $2$.
	All other rectangles of height $4$ have a target area change of $2c_1$, i.e., they can be extended by the two circular segments at their short sides but otherwise want to keep their area constant. 
	Finally, the taller connector rectangles (which actually consist of six vertices) are adjacent to a literal gadget on one of their short sides  (indicated by dots in Fig.~\ref{fig:var-gadget}) and to a right triangle on the other short side. 
	This triangle has target area $0$, but other than the skinny triangles described before, all three possible $0$-area configurations in Fig.~\ref{fig:area0} are feasible. 
	The area change of the connector rectangles is $2c_2$, the area gained from the two small skinny triangles adjacent to the left and right sides of the length-1 edges that stick out of the variable row.
	
  Let us consider the purple decision rectangle in the center, with its two short edges fixed by the shape of the attached skinny triangles.
	If one of its long edges is bent inside the rectangle as exactly a half-circle and the opposite edge remains a straight-line segment, then the specified area constraint is satisfied. 
	It is, however, geometrically impossible to achieve the given target area by bending both edges simultaneously inside the rectangle like in a concave lens.
	Hence we can use the two possible configurations of the decision rectangle to encode the two truth values of the variable; see Fig.~\ref{fig:var-gadget}b and~\ref{fig:var-gadget}c.
	Since by pulling one long edge inside the decision rectangle the area of the adjacent rectangle in the gadget enlarges, that adjacent rectangle must in turn pull the opposite long edge inwards by the same amount. 
	So the semi-circle arcs propagate, similar to negative air pressure in a physical model, on one side of the gadget, namely that side whose connecting literals evaluate to \emph{false} in the current state.
	
	It remains to describe the behavior of the connector rectangles. 
	Since the long edges are bent into half-circles and no two edges of the subdivision may cross, the right triangle attached to the connector rectangle must be in the state that forms a half-circle and thus increases the area of the connector rectangle. 
	In order to balance this area increase, the opposite short edge must be bent inwards and form an identical half-circle.
	This gives us a means to transmit the negative pressure from the center of the variable towards all literals that evaluate to \emph{false}.
	
	There is no negative pressure on the positive side of the variable gadget, i.e., the side whose literals evaluate to \emph{true}. 
	Hence the long edges of the rectangles on this side can remain straight, and there are two possible configuration for the short edges of each connector rectangle, one of which pushes the half-circles towards the literal gadgets rather than pulling them away as it is the case on the negative side. 
	
	\paragraph{Literal gadget} The main task of the literal gadget is to maintain and transmit the truth state that is found at the variable gadget towards the clause gadget.
	The gadget can be seen as a pipe composed of chains of rectangles that connects variable and clause.
	In the pipe the truth value is transmitted by pulling or pushing the long edges of the rectangles into half-circles similarly as in the variable gadget.
	Three literal gadgets are depicted (together with a clause gadget) in Fig.~\ref{fig:clause-gadget}.
	
	\begin{figure}[htb]
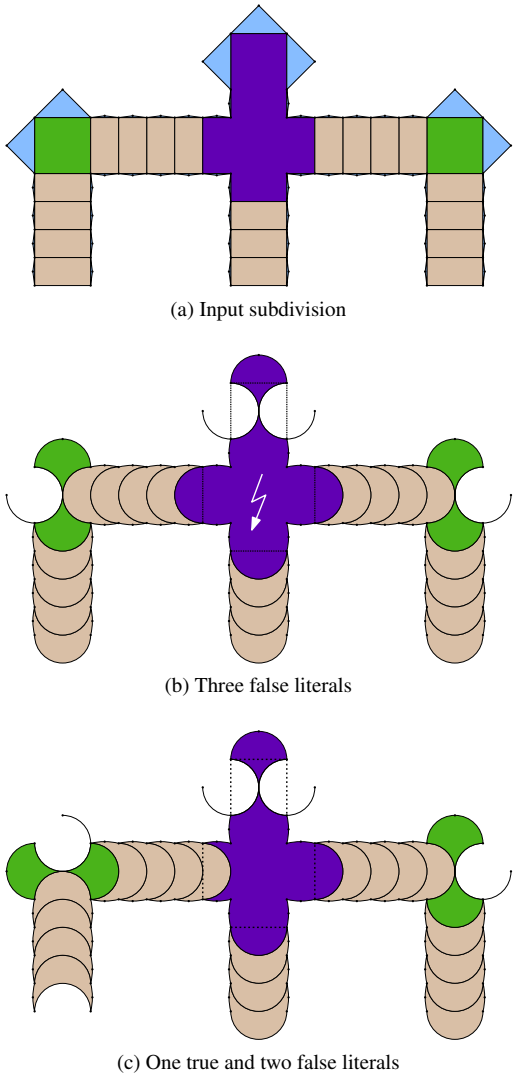

	\centering
			\subfloat[Input subdivision]{\includegraphics[scale=.66,page=5]{Images/hardness-3sat}}\\
			\subfloat[Three false literals]{\includegraphics[scale=.66,page=6]{Images/hardness-3sat}}
			\hfill
			\subfloat[One true and two false literals]{\includegraphics[scale=.66,page=7]{Images/hardness-3sat}}
		\caption{Cross-shaped clause gadget with three literal gadgets.}
		\label{fig:clause-gadget}
	\end{figure}
	
	There is one notable difference from the transmission of the truth value in the variable gadget since two of the incoming literals for each clause make a turn of 90$^\circ$. 
	The turn is realized by a square of side length $2$ with target area $4$ and two right triangles with target area $0$ (as those  in the connector rectangles of the variable gadget). 
	Since the two right triangles are placed on adjacent sides of the square, one of them must bend to the outside of the square while the other one must bend to the inside. 
  If the literal is in state \emph{false} (left and right in Fig.~\ref{fig:clause-gadget}b) and the half-circles are pulled towards the variable gadget, then both the left and right edges of the square are bent inward while the top and bottom edges are bent outward. 
	This is exactly what is needed to transmit negative pressure to the horizontal part of the literal gadget.
	For a literal in state \emph{true} we observe exactly the opposite behavior. Fig.~\ref{fig:clause-gadget}c shows a \emph{true} literal on the left and a \emph{false} literal on the right.
	
	\paragraph{Clause gadget} The clause gadget consists of a cross-shaped rectilinear \emph{clause polygon} joining the three incoming literal gadgets; see Fig.~\ref{fig:clause-gadget}.
	In its top part there are three right triangles with target area~$0$. 
	The target area increase of the clause polygon is $8c_2$, the area increase caused by the eight skinny triangles attached to some of its edges.
	Note that of the three right triangles at most two can simultaneously bend as half-circles inside the polygon, while they all can bend to the outside independently.
	As long as one of the incoming literals is \emph{true}, i.e., it pushes a half-circle inside the clause polygon, the three triangles in the top part can balance the area change of the clause polygon caused by any other combination of the remaining two literals; see Fig.~\ref{fig:clause-gadget}c. %
	However, if all three literals are \emph{false}, the area of three half-circles is added to the clause region (indicated by dotted line segments). 
	Consequently, the area of three half-circles must be removed from the clause region, but at most two half-circles can be removed by the right triangles; see Fig.~\ref{fig:clause-gadget}(b). 
	This shows that the area requirement of the clause polygon can be realized if and only if the clause evaluates to \emph{true} in the given truth assignment.

	\paragraph{Reduction} 
	From the construction of the gadgets it follows that if the Boolean formula $\varphi$ has a satisfying variable assignment, then the subdivision $S_\varphi$ and the weights $t_\varphi$ are a positive instance of \CAC. On the other hand, we can immediately obtain a satisfying truth assignment for the variables of $\varphi$ from a valid circular-arc cartogram of $S_\varphi$. 
	The vertices of the subdivision $S_\varphi$ all lie on a grid of polynomial size and the target weights are either $0$ or can be encoded algebraically in polynomial space. 
	This concludes the proof.%
\end{proof}

We note that all vertices in the subdivision $S_\varphi$ either belong to triangles or have degree at least~3. Thus the complexity of $S_\varphi$ cannot be decreased further and the hardness result continues to hold for maps that are minimal in that sense.

\section{Heuristic Method for Computing Circular-Arc Cartograms}\label{sec:algorithm}

Here we describe a versatile heuristic method for generating circular-arc cartograms based on network flows and polygonal straight skeletons. In practice we may assume that our input map is already a simplified or even schematized map that retains the characteristic shapes of the countries but at the same time strongly reduces the polygon complexities. The more simplified the shapes the longer the edges and, consequently, the larger the potential area changes that can be realized by bending the edges. 
Buchin \emph{et al.}~\cite{bms-msswaug-11} or de Berg \emph{et al.}~\cite{bks-ass-95} described suitable algorithms for computing topologically correct subdivision simplifications and schematizations.

Recall that in our model a map is a subdivision $S$ of the plane into a set of disjoint faces, $\mathcal{F} = \{f_1, \dots, f_n\}$, where each face is a simple polygon. The topological structure of the map is described by its dual \emph{face graph} $G$, which contains a vertex $v_i$ for each face $f_i$ and an edge $\{v_i,v_j\}$ between adjacent faces $f_i$ and $f_j$. 
Here we convert $G$ into a directed graph: for any two adjacent countries in $S$ the corresponding vertices in $G$ are connected with two edges, one for each direction.

The initial face areas are described by the vector $a=(a_1, \dots, a_n)$ and the target areas are given by the vector $t=(t_1, \dots, t_n)$. 
Without loss of generality, we can assume that both vectors are normalized, i.e., $\sum_{i=1}^n a_i = \sum_{i=1}^n t_i = 1$. This means that the total area of the map remains the same. From $a$ and $t$ we can obtain the vector $\Delta = (\Delta_1, \dots, \Delta_n)$ of \emph{desired area changes}, where $\Delta_i = t_i - a_i$ for each $i=1, \dots, n$. Note that $\sum_{i=1}^n \Delta_i = 0$.

The goal of our algorithm is to compute a valid bending configuration in which the resulting face areas $b=(b_1, \dots, b_n)$ are as close to the given target areas $t$ as possible. More precisely, we aim to minimize the error $\sum_{i=1}^n |b_i - t_i|$.

\subsection{A Network Flow Model for Circular-Arc Cartograms}

We use the directed face graph $G$ to define a flow network in which the flow along an edge $e=(u,v)$ corresponds to the area exchange from the face vertex $u$ to the face vertex $v$. 
We define the capacity $c(e)$ to be equal to an area that can be safely transferred from $u$ to $v$. To compute valid capacities we use the geometry of the face polygons that specify the countries. If we want to restrict ourselves to strong circular-arc cartograms we set $c(e)=0$ for all edges~$e$ between two regions $f_i$ and $f_j$ for which $\Delta_i \cdot \Delta_j \ge 0$; for weak cartograms there is no such restriction.

The {\em straight skeleton} of a simple $m$-edge polygon, $P$, is made of straight-line segments and partitions the interior of $P$ into $m$ disjoint regions, each corresponding to exactly one edge of $P$~\cite{DBLP:journals/jucs/AichholzerAAG95}. The straight skeleton is similar to the medial axis but does not require parabolic curves and can be efficiently computed in subquadratic time~\cite{DBLP:journals/algorithmica/ChengV07}. %
Because the straight skeleton partitions a polygon into disjoint regions, we can define a ``safe'' bending limit for each edge of the polygon by requiring that the circular arcs remain inside their skeleton regions; see Fig.~\ref{fig:skeleton}. This guarantees that no two circular arcs cross. For each edge $e=(u,v)$ we can thus define the capacity $c(e)$ as the maximally transferable area from face $u$ to face $v$ subject to the constraint that every circular arc on the boundary between $u$ and $v$ remains inside its skeleton regions. The capacities are by definition static and independent of each other. We note that there is still room for enlarging the capacities over this definition, e.g., by considering to remove some degree-2 vertices and consequently merge their incident boundary edges and their skeleton regions. This yields longer boundary edges that allow larger arcs with larger transferable areas. 

\begin{figure}[htb]
    \includegraphics[width=\columnwidth]{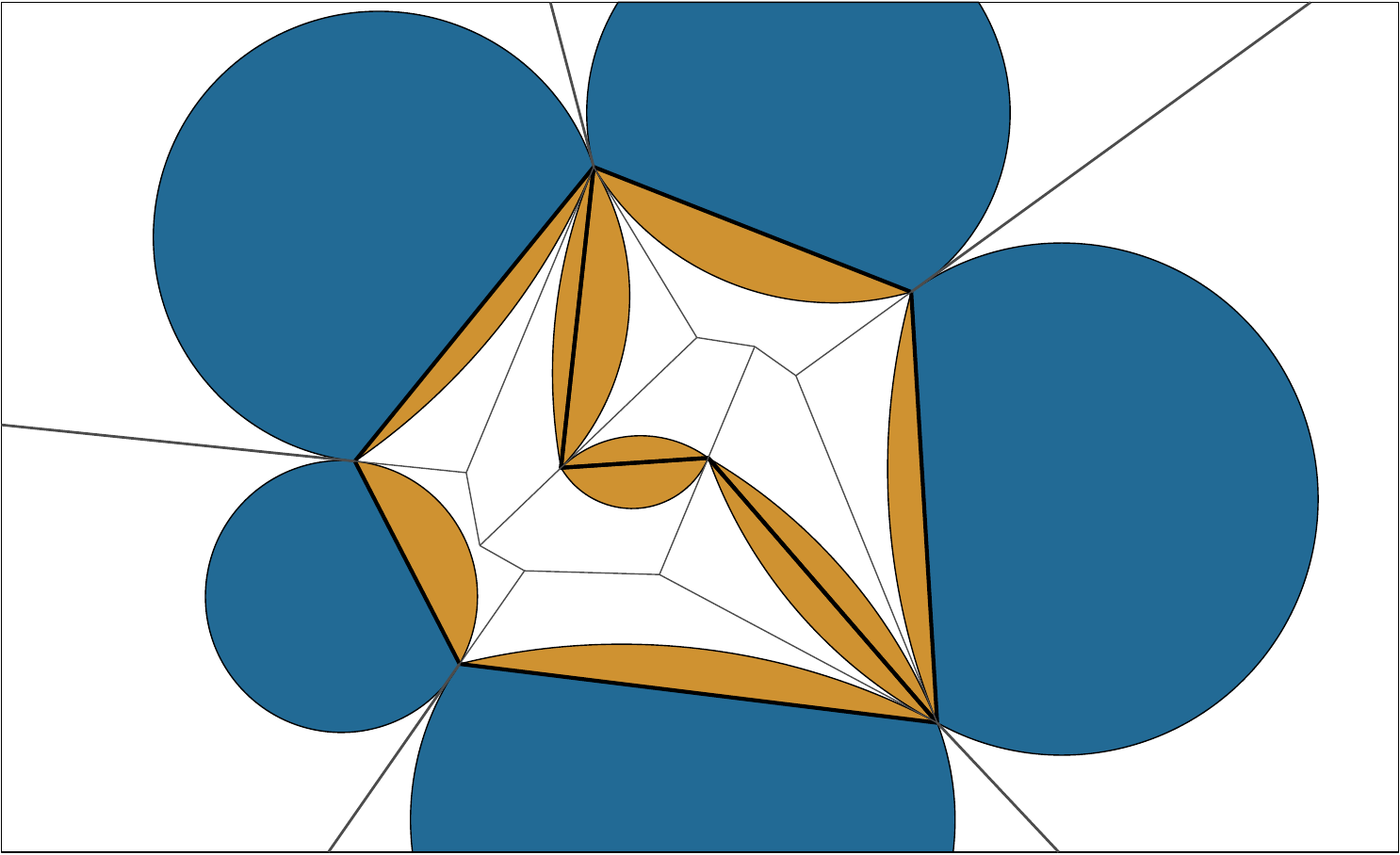}
	\caption{The straight skeleton of two adjacent polygons and the maximally realizable circular-arcs within the safe bending limits of each edge.}
	\label{fig:skeleton}
\end{figure}

Once we have computed a set of valid edge capacities for~$G$, we create a new vertex $v_i'$ for every vertex $v_i$ in $G$. If $\Delta_i > 0$ we make $v_i'$ a source vertex and add the edge $(v_i',v_i)$ with capacity $c(v_i',v_i) = \Delta_i$ to $G$; otherwise if $\Delta_i < 0$ we make $v_i'$ a sink vertex and add the edge $(v_i,v_i')$ with capacity $c(v_i,v_i') = - \Delta_i$ to $G$. Let $S$ be the set of sources and $T$ the set of sinks.

    The quadruple $\mathcal{N}=(G,c,S,T)$ now forms a multiple-source multiple-sink flow network, which is planar since the original face graph of $S$ was planar.
    If a maximum flow in $\cal N$ with a value of $D=\sum_{\Delta_i > 0} \Delta_i$ can be found, we know that all target areas can be achieved.
    Furthermore, even if the maximum flow has a value of less than $D$, it still corresponds to a bending configuration that minimizes the cartographic error $\sum_{i=1}^n |b_i - t_i|$ under the given safety constraints for the circular arcs.

\begin{figure}[h]
	\centering
		\includegraphics[width=\columnwidth]{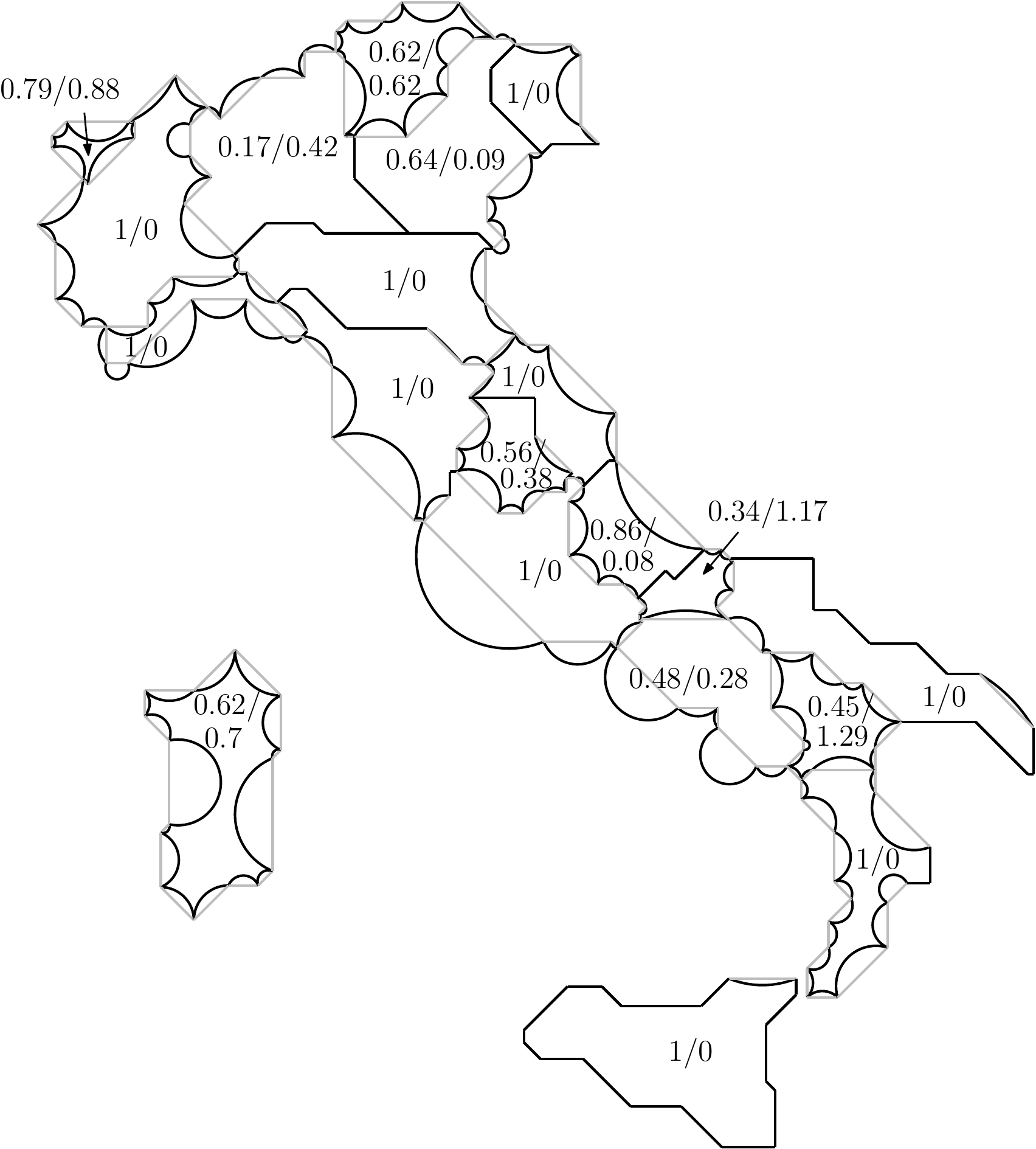}
	\caption{Cartogram of the population in Italy; the first number indicates the success rate, the second number the relative cartographic error.}
	\label{fig:ital-pop}
\end{figure}

\begin{figure}[h]
	\centering
		\includegraphics[width=\columnwidth]{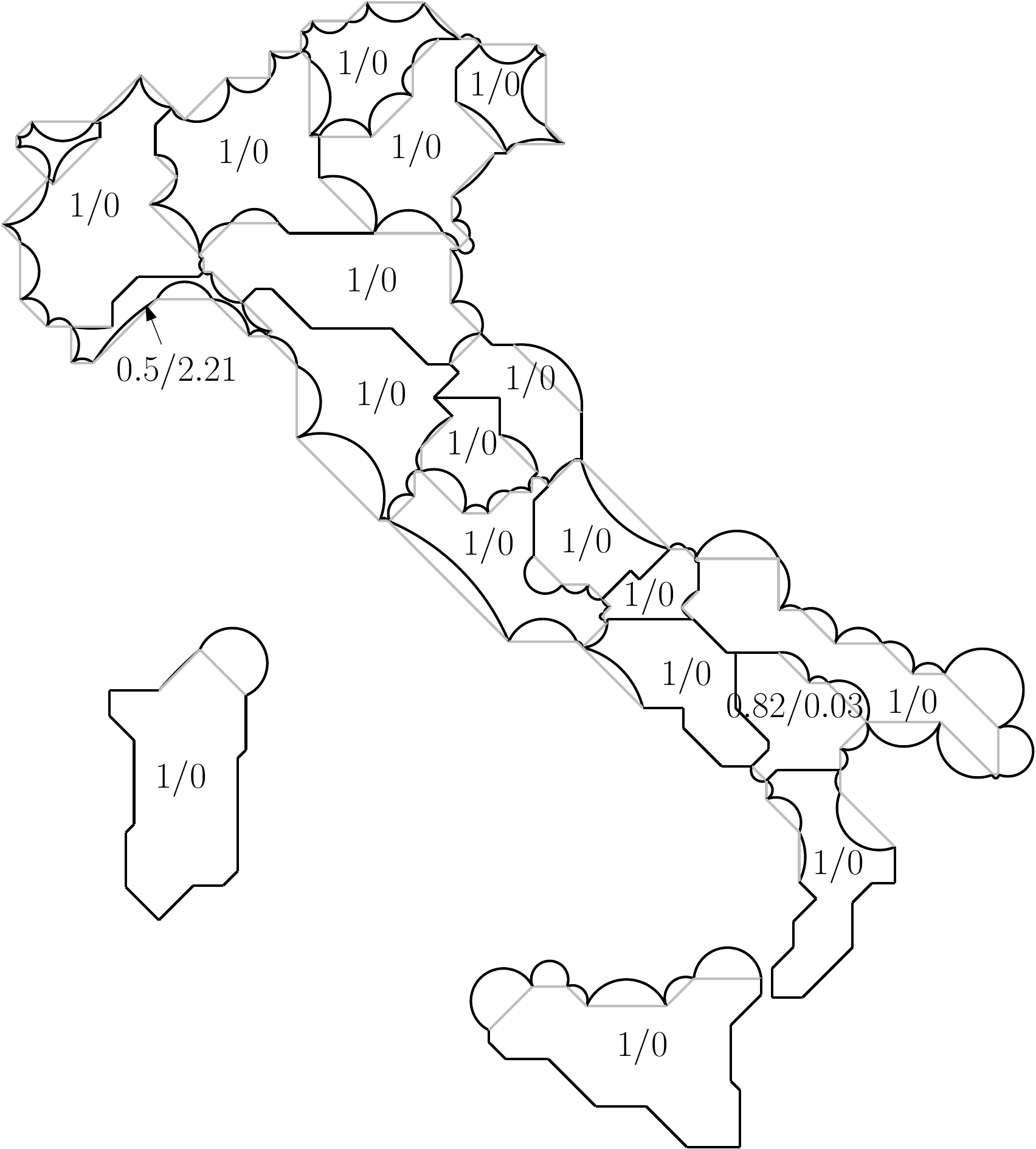}
	\caption{Cartogram of the agricultural use area in Italy; the first number indicates the success rate, the second number the relative cartographic error.}
	\label{fig:ital-agri}
\end{figure}
The expected running time for computing the straight skeleton of a $k$-vertex polygon is $O(k \log^2 k)$~\cite{DBLP:journals/algorithmica/ChengV07}. So if the input subdivision $S$ consists of $n$ faces with $N$ vertices in total, we can compute the straight skeletons in $O(N \log^2 N)$ expected time in total. 
To solve the multiple-source multiple-sink maximum-flow problem in our flow network based on the planar face graph of $S$ we can use the recent $O(n \log^3 n)$-time algorithm of Borradaile \emph{et al.}~\cite{bkmnw-mmmfdpgnt-11}.

\subsection{Implementation and Results}

We implemented a prototype of our method in C++ using the CGAL library~\cite{cgal} for computing the straight skeletons and Boost~\cite{boost} for solving the max-flow problem. 
In this section we present four examples of circular-arc cartograms produced with our implementation, which currently supports only weak circular-arc cartograms.
As input subdivisions we used octilinear and rectilinear 
schematized maps generated with the algorithm of Buchin et al.~\cite{bms-msswaug-11} for area-preserving subdivision schematization.
We note here that keeping the vertex positions fixed in our method only makes sense if these vertices are actually characteristic corners of the original shape. This is an interesting problem in its own right, which could be addressed with a shape simplification method that identifies and retains characteristic points, but out of scope in this paper. 
To demonstrate our circular-arc approach, we can assume that the vertices have been chosen in a meaningful way so the polygonal shapes represent the corresponding countries well. 
In the appendix, we present four additional examples using a manually simplified input map.

For each example below, we measure both the \emph{success rate}, which is defined as $(a_i-b_i)/\Delta_i$, i.e., the relative achieved area change, and the relative \emph{cartographic error}, which is defined as $|b_i-t_i|/t_i$~\cite{ks-rc-07}. We show the input polygons in gray and overlay the circular-arc cartogram and label each country with the pair (success rate, cartographic error).

Figures~\ref{fig:ital-pop} and~\ref{fig:ital-agri} show cartograms of the regions in Italy. 
Figure~\ref{fig:ital-pop} represents the population distribution in Italy\footnote{2010 population data from
\url{http://demo.istat.it/pop2010}.} and Figure~\ref{fig:ital-agri} represents the agricultural use areas in each region\footnote{2010 superficie agricola utilizzata data from Noi Italia 2012, \url{http://www3.istat.it/dati/catalogo/20120215_00/Noi_Italia_2012.pdf}}. 
This is an example of a map where our algorithm performs well. %
The average success rate in Figure~\ref{fig:ital-pop} is 0.78 and the average cartographic error is 0.3. In Figure~\ref{fig:ital-agri} the average success rate is as high as 0.97 and the average error is 0.11; here only two regions have non-zero error.
In the case of Italy most regions have access to the external sea face where the maximum size of circular arcs is less restricted. Moreover, the desired area changes are relatively moderate. With the removal of a few degree-2 vertices, i.e., a further simplification of the input subdivision, we could improve the area accuracy in Figure~\ref{fig:ital-pop} even more (in Sardegna or Campania) without the need of displacing vertices.

\begin{figure}[h]
	\centering
	\includegraphics[width=\columnwidth]{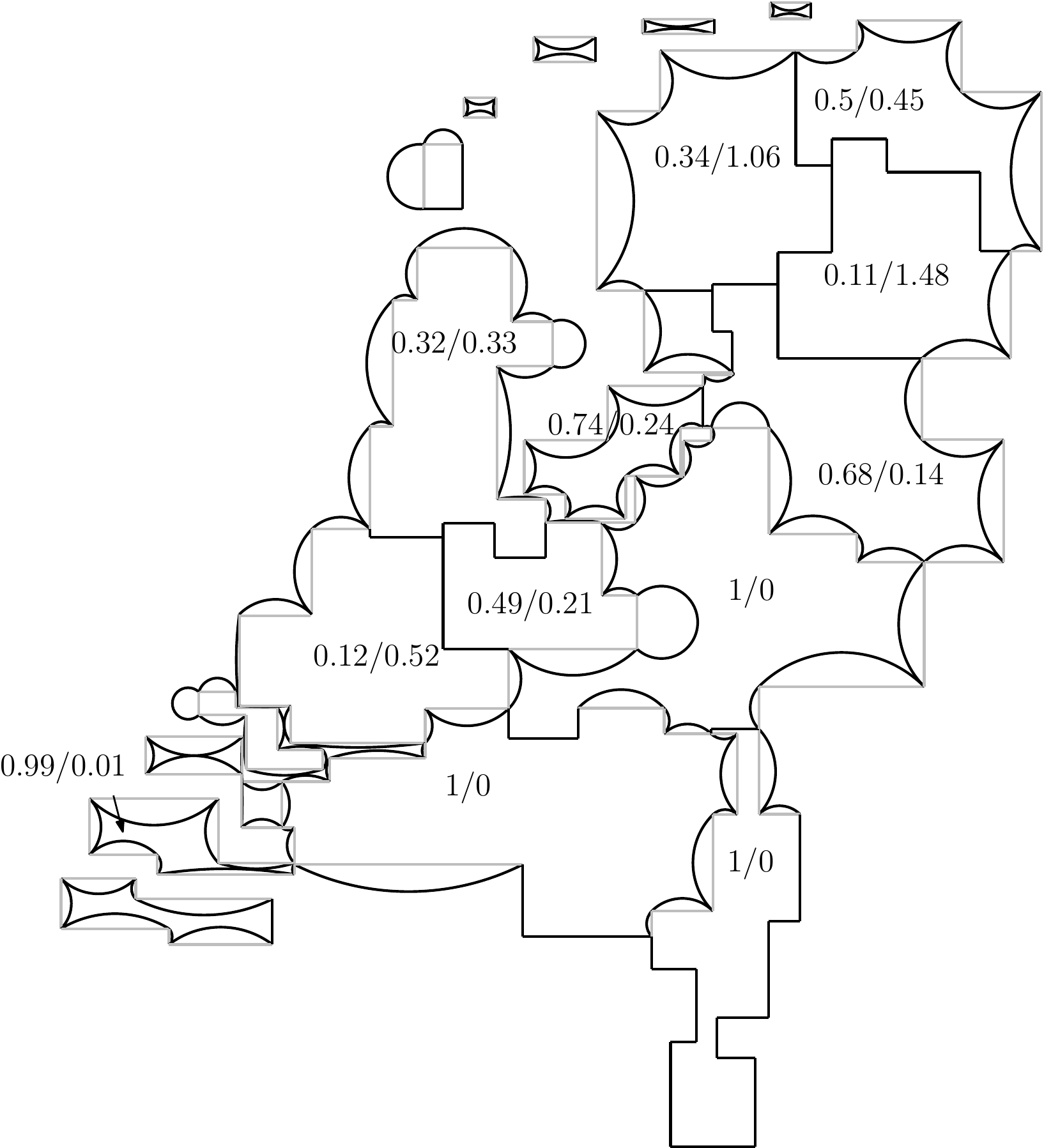}
	\caption{Cartogram of the population in the Netherlands.
}
	\label{fig:ned_pop}
\end{figure}

\begin{figure}[h]
	\centering
	\includegraphics[width=\columnwidth]{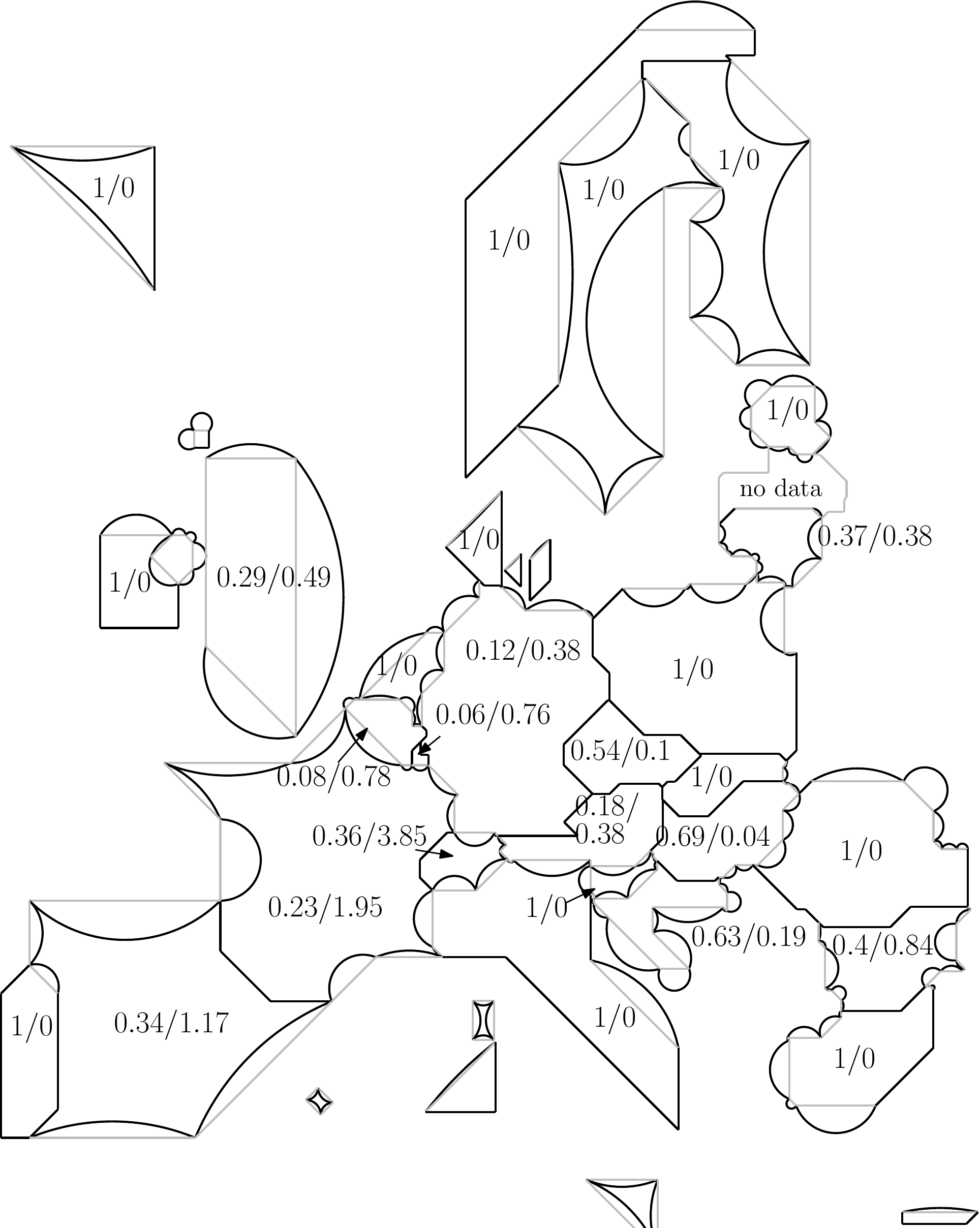}
	\caption{Cartogram of the length of the main roads in Europe.
}
	\label{fig:europe_mainroads_img}
\end{figure}

Figure~\ref{fig:ned_pop} shows a cartogram for the population distribution in the 
  Netherlands\footnote{2004 population data from \url{http://en.wikipedia.org/wiki/Ranked_list_of_Dutch_provinces}.}.
This cartogram is based on a rectilinear rather than an octilinear schematization.
  The Netherlands are quite unevenly populated: for example the three provinces of Noord-Brabant, Zuid-Holland and Noord-Holland (containing all important urban areas), contribute more than half of the Dutch population. This imbalance between south and
  north can bee seen well in the cartogram. The regions of the metropolitan south are cloud-shaped, while the northern rural areas are more snowflake-shaped.
The imbalance in the data leads to slightly worse performance than in the previous example of Italy; the average success rate is 0.61 and the average cartographic error is 0.37. Further simplification of polygons and potentially some vertex displacements will help to increase the area accuracies.

Figure~\ref{fig:europe_mainroads_img} shows a cartogram based on the length of main roads per country\footnote{2012 road length data from \url{http://ec.europa.eu/transport}}. %
All regions in both outputs are recognizable. We can easily identify the different countries as the overall shapes are not very distorted. Even the aspect ratios of most regions remain mostly unchanged. The length of all borders is at least as big as in the input, so we obtained an improved readability of adjacencies.

The results in Figure~\ref{fig:europe_mainroads_img} show an average success rate of 0.69 and an average cartographic error of 0.4, with small and landlocked countries affected the most.
Groups of landlocked countries, such as Switzerland and Austria, that all need to increase (decrease) their sizes pose significant difficulties. While being adjacent to the sea helps, it does not always suffice to reach the target area: autobahn and motorway giants such as Germany and the UK, need to further increase their areas but are eventually blocked by other countries. This example suggests that for cartograms with large area changes it is necessary to allow additional distortions, e.g., by allowing  vertex movement in order to decrease cartographic error; we have not considered such an approach yet.

In summary, the examples illustrate the utility of circular-arc cartograms. 
they are readable (countries are where they should be), they are recognizable (the adjacencies between neighbors are preserved), they have low complexity, and they yield visually appealing country shapes that immediately communicate whether regions increase or decrease. Moreover, as the US presidential election example on Fig.~\ref{fig:usa} shows, circular-arc cartograms make it possible to spot patterns in the data when another parameter is encoded with color.
On the other hand, with our current heuristic we cannot guarantee low cartographic error if drastic area changes that are required by the data.
This is not necessarily a downside of circular-arc cartograms themselves, but rather due to our heuristic used to compute these examples.
Other more abstract cartogram types, e.g., rectangular cartograms~\cite{ks-rc-07,raisz} or circle cartograms~\cite{Dorling}, typically achieve very low area errors but come at the cost of lower recognizability as they often change adjacency relationships between neighboring countries. Finally, all of the examples in this paper are of weak circular-arc cartograms, and strong circular-arcs might be preferable.
In the next section we briefly discuss several possible approaches to decrease cartographic errors for circular-arc cartograms.

\section{Conclusions and Future Work}\label{sec:conclusion}
In this paper we introduced circular-arc cartograms as a new model for value-by-area diagrams. We showed that the \CAC problem is NP-hard and presented a heuristic algorithm to produce valid circular-arc cartograms with fairly low cartographic error. 
The results from our implementation indicate that circular-arc cartograms are readable, recognizable, have low complexity and are generally visually appealing. While for many countries in our %
examples the cartographic error is low, this cannot be guaranteed. 
There are several natural directions for future algorithmic and experimental work on circular-arc cartograms.

First, we note that the potential area change for each edge depends on its input length. 
Thus the fewer and longer the edges of a face boundary are, the larger is the range of realizable areas of the face. 
While we assumed that a fixed (simplified) subdivision is given as input, we can also allow further simplification on demand, i.e., the larger the required area changes the more polygon vertices are discarded in order to create longer and fewer edges. Such an approach preserves well the shape and complexity of regions with low area change, whereas at the same time strongly distorted regions with large area change become more strongly simplified. We could also allow introducing gaps with the shape of biconvex lenses between two neighboring countries that both need to decrease their areas; this idea corresponds to splitting these edges into two, each of which can then be bent inwards. 

Second, while it is generally undesirable to displace regions, it seems often possible to obtain lower cartographic errors by displacing just a few boundary vertices. It is natural to consider the trade-off between minimizing the overall cartographic error and minimizing overall vertex movement.

Third, we need to further study the effect of weak and strong circular-arc cartograms on error rates and perception. Recall that in the strong version all edges of a deflated country point inwards, while in the weak model (used in this paper) we allow some edges to point out. 

Fourth, one of the appealing features of circular-arc cartograms is the easily-interpretable cloud-like shape of countries that have increased area and snowflake-shape of countries with decreased area. Generalizing circular arcs to other types of smooth curves, e.g., cubic splines, may result in visually similar cartograms which allow for more flexibility and better accuracy. 

Ultimately, there is a need for a formal evaluation of the utilities of cartograms in general and of circular-arc cartograms in particular. It would natural to expect that readability, recognizability, faithfulness and complexity would vary in importance, depending on the given task. Determining the ``best'' cartograms would be a difficult but worthwhile goal.

\medskip
\noindent\textbf{Acknowledgments.} We thank Wouter Meulemans for providing us with schematized map instances. Research supported in part
by NSF grants CCF-1115971, DEB 1053573 and by the \emph{Concept for the Future} of KIT within the
framework of the German Excellence Initiative.

\bibliographystyle{abbrv+}
\bibliography{abbrv,masterReferences,missingRefs} 

\appendix
\section{Further examples}

\begin{figure}[h]
	\centering
	\includegraphics[width=\columnwidth]{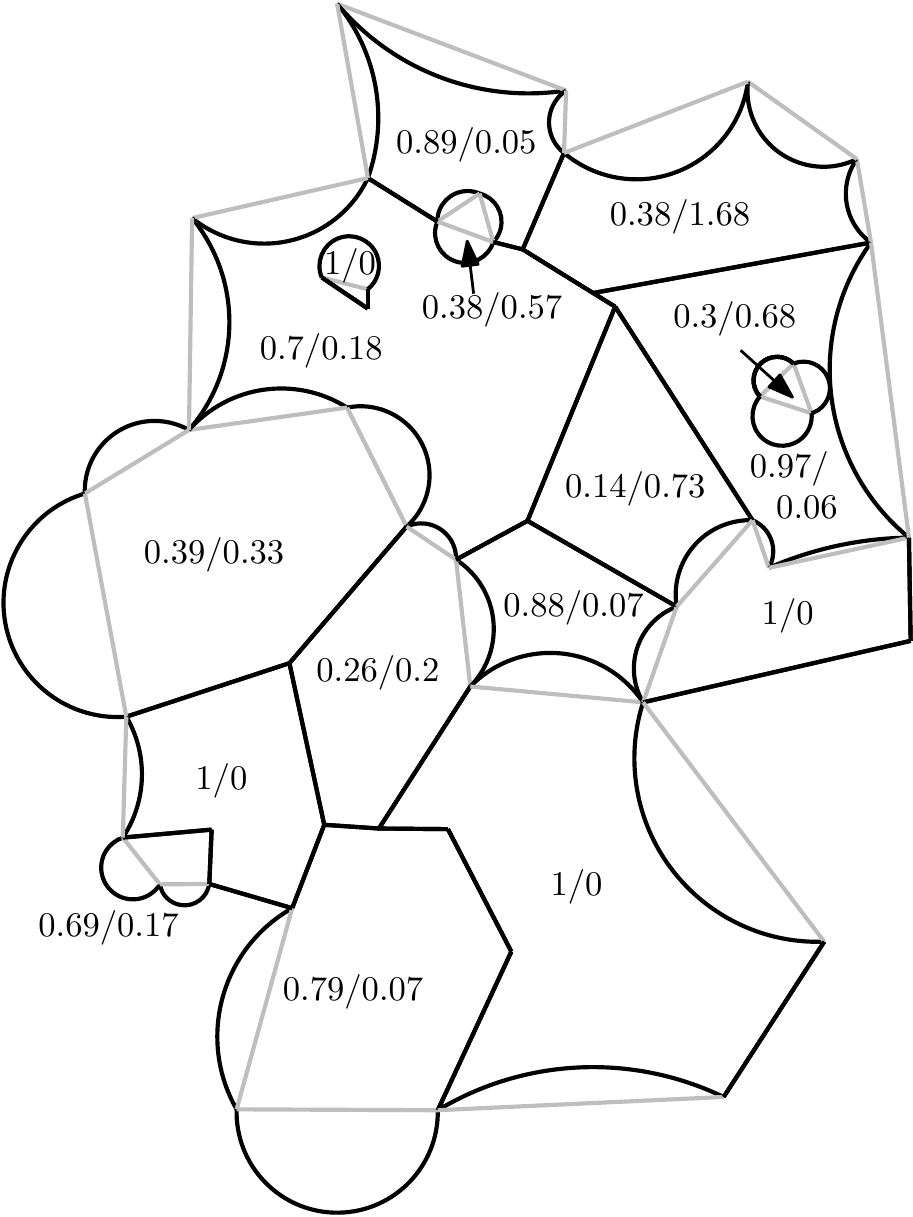}
	\caption{Population cartogram for the states of Germany.
}
	\label{fig:germany_pop}
\end{figure}

Figures~\ref{fig:germany_pop}--\ref{fig:germany_rail_fine} show cartograms of the states of Germany. We used three different data sets: population data\footnote{2011 population data from
\url{http://www.statistik-portal.de/Statistik-Portal/de_jb01_jahrtab1.asp}.}, number of craft enterprises\footnote{2009 craft enterprise data from \url{http://www.statistik-portal.de/Statistik-Portal/de_jb19_jahrtab1.asp}}, and railroad kilometers\footnote{2010 railroad data from \url{https://www.destatis.de/DE/ZahlenFakten/Wirtschaftsbereiche/TransportVerkehr/UnternehmenInfrastrukturFahrzeugbestand/Tabellen/Schieneninfrastruktur.html}}. The underlying input map was simplified by hand in a way that the most characteristic shapes are preserved and yet only relatively few edges per polygon remain. Unlike in the previous examples, we did not restrict the edge slopes.

The average success rate in Figure~\ref{fig:germany_pop} is 0.67 and the average cartographic error is 0.3. While several states are error-free or perform fairly well, there are notable examples of densely populated states like Berlin that need to grow a lot further and states like Mecklenburg-Vorpommern in Northern Germany that are sparsely populated and need to shrink a lot more. In both cases low success rates and high errors are observed; similarly the landlocked states perform worse than those on the boundary.

\begin{figure}[h]
	\centering
	\includegraphics[width=\columnwidth]{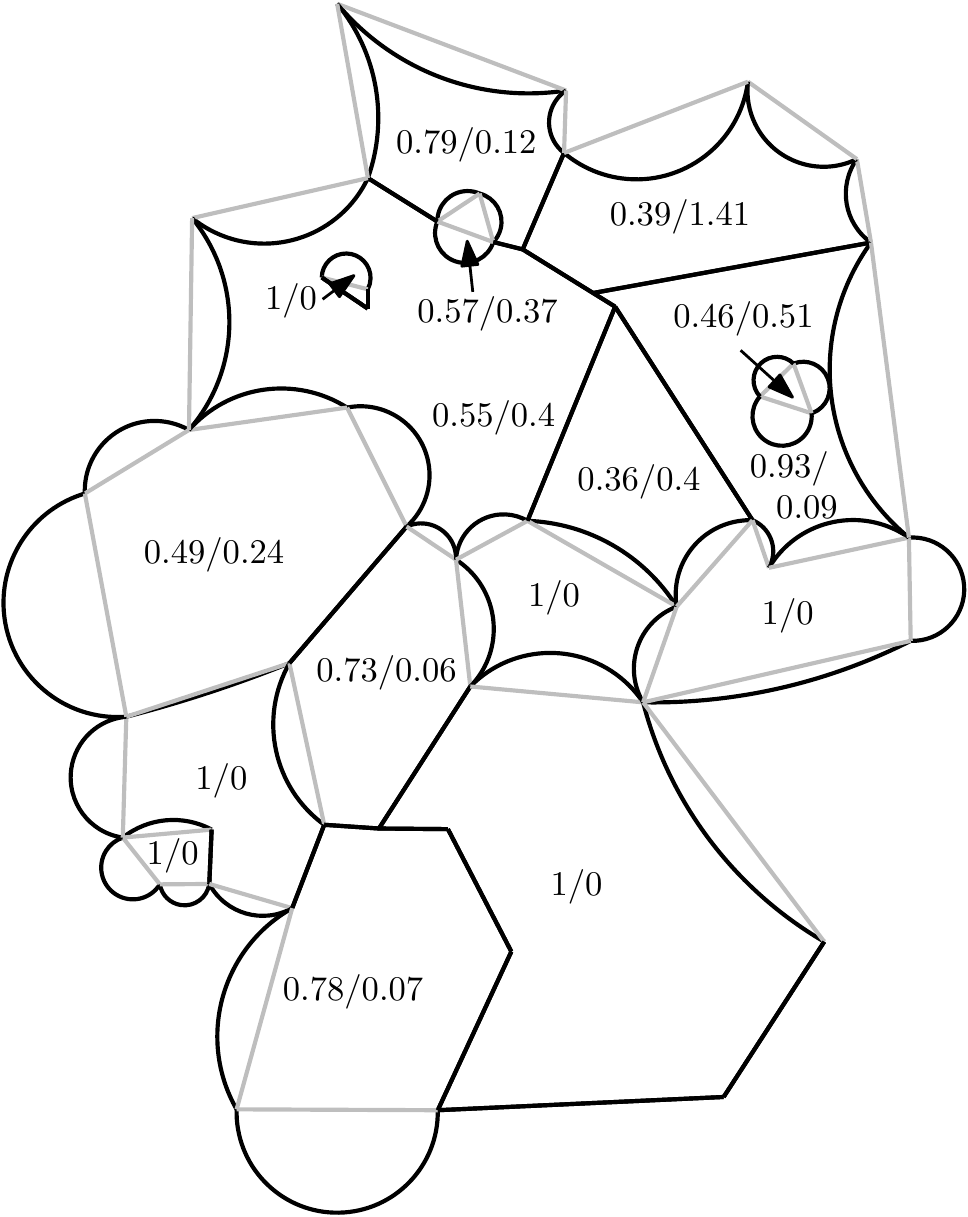}
	\caption{Cartogram for the number of crafts enterprises in the states of Germany.
}
	\label{fig:germany_craft}
\end{figure}

Figure~\ref{fig:germany_craft} shows interesting statistical data in the sense that there is a clear difference between the states in the North (except the city states Bremen, Hamburg and Berlin) with fewer enterprises and the states in the South with more enterprises. The average success rate in this example is 0.75 and the average error rate is 0.23. Thus we see slightly better results than for the population data. As expected, the problematic states are again those that are very densely populated and the sparse state of Mecklenburg-Vorpommern. Although the accuracy is not yet fully satisfying, the overall trend in the data is nicely conveyed. The southern states as well as the cities are all cloud-shaped having many craft enterprises and the northern states are all snowflake-shaped meaning fewer craft enterprises.

\begin{figure}[h]
	\centering
	\includegraphics[width=\columnwidth]{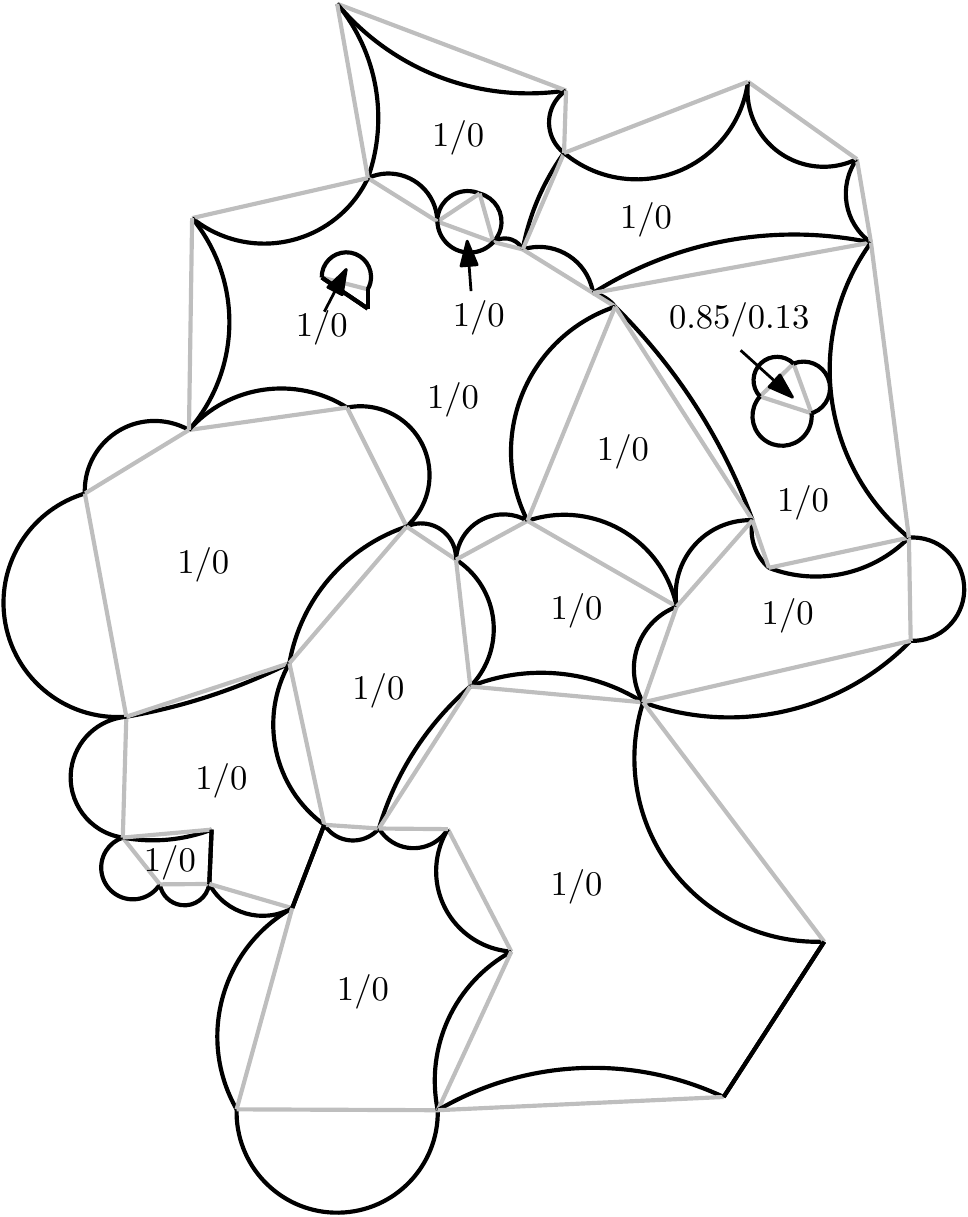}
	\caption{Cartogram for the railroad kilometers in the states of Germany.
}
	\label{fig:germany_rail}
\end{figure}

\begin{figure}[h]
	\centering
	\includegraphics[width=\columnwidth]{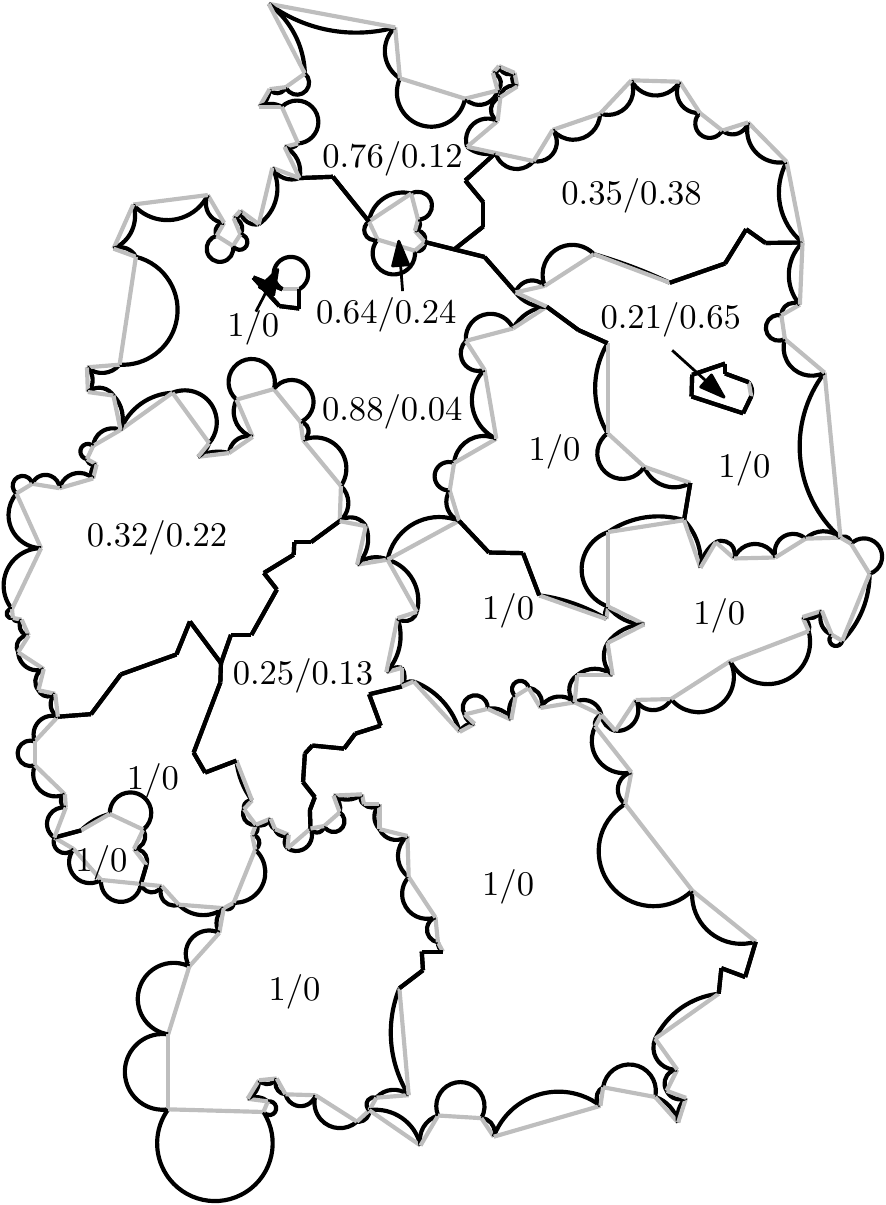}
	\caption{Cartogram for the railroad kilometers in the states of Germany using a more detailed input map.
}
	\label{fig:germany_rail_fine}
\end{figure}

Finally, Figures~\ref{fig:germany_rail} and~\ref{fig:germany_rail_fine} show two cartograms for the same data set of railroad kilometers per state. Figure~\ref{fig:germany_rail} uses the same input map as the previous two examples and Figure~\ref{fig:germany_rail_fine} uses a more detailed input map with a lot more and a lot shorter edges defining the state polygons. In Figure~\ref{fig:germany_rail} we see only a single state with an area error, namely Berlin which has a very extensive railway network compared to its area. Consequently the average success rate is more than 0.99 and the average area error is less than 0.01. Even all the landlocked states achieve their target areas completely. Since this example performs so well, it is interesting to study the effects of increasing the shape complexity of the input map by adding back in more details. 
The map in Figure~\ref{fig:germany_rail_fine} uses more than four times as many edges as Figure~\ref{fig:germany_rail}. While the performance of this cartogram is still reasonably good with an average success rate of 0.78 and an average error of 0.11, it gets clear in the direct comparison that the capacity of bending the shorter edges is not sufficient to reach all target areas, both for growing and for shrinking states. On the other hand the similarity to the true geographic shapes is higher in this cartogram. Nonetheless, the stylized appearance of Figure~\ref{fig:germany_rail} with fewer and longer arcs makes this cartogram more appealing for the purpose of depicting statistical data on a very abstract map. The comparison of the two cartograms and their performance shows that strong simplification of the input shapes is generally advisable, both for better aesthetics and to achieve lower cartographic errors.

\end{document}